\renewcommand{\cite}{\citep}
\renewenvironment{bmatrix}{\bsmallmatrix}{\endbsmallmatrix}
\newtheorem{theorem}{Theorem}
\newcommand{\agentSet}{\mathcal{A}}
\newcommand{\edgeSet}{\mathcal{E}}
\newcommand{\neighborSet}{\mathcal{N}}
\newcommand{\greed}{\gamma}
\newcommand{\numAgents}{n}
\newcommand{\numLanes}{m}
\newcommand{\numSlices}{\sigma}
\newcommand{\balVec}{\mathbf{b}}
\newcommand{\bal}{b}
\newcommand{\trans}{T}
\newcommand{\transAmount}{a}
\newcommand{\transVec}{\boldsymbol{\tau}}
\newcommand{\transElem}{\tau}
\newcommand{\transLaneSet}{{\laneSet_{\trans}}}
\newcommand{\augTransLaneSet}{{\bar{\laneSet}_{\trans}}}
\newcommand{\optTransLaneSet}{\laneSet^*_{\trans}}
\newcommand{\transPool}{{\mathcal{P}}}
\newcommand{\lane}{s}
\newcommand{\laneSet}{\mathcal{S}}
\newcommand{\feasLaneSets}{\mathbb{S}}
\newcommand{\laneUseVec}{\mathbf{u}}
\newcommand{\laneUseElem}{u}
\newcommand{\lanePool}{{\mathcal{P}}}
\newcommand{\congElem}{\lambda}
\newcommand{\congVec}{\boldsymbol{\congElem}}
\newcommand{\laneEff}{B}
\newcommand{\transEff}{F}
\newcommand{\transPrice}{\phi}
\newcommand{\priceFunc}{f}
\newcommand{\cardExp}{\alpha}
\newcommand{\opPrice}{p_0}
\newcommand{\maxFee}{\transPrice_{\max}}
\newcommand{\reals}{\mathbb{R}}
\newcommand{\elVec}{\mathbf{e}}
\newcommand{\simplex}{\Delta}
\newcommand{\pureStrategySet}{\Pi}
\newcommand{\laneReqWeight}{w}
\newcommand{\laneReqWeightVec}{\mathbf{\laneReqWeight}}
\newcommand{\laneSendWeight}{v}
\newcommand{\laneSendWeightVec}{\mathbf{\laneSendWeight}}
\newcommand{\laneSendWeightEstVec}{\hat{\laneSendWeightVec}}
\newcommand{\laneReqWeightEstVec}{\hat{\laneReqWeightVec}}
\newcommand{\brLaneReqWeightVec}{\laneReqWeightVec^*}
\newcommand{\brEstLaneReqWeightVec}{\laneReqWeightVec^+}
\newcommand{\brLaneSendWeightVec}{\laneSendWeightVec^*}
\newcommand{\brEstLaneSendWeightVec}{\laneSendWeightVec^+}
\newcommand{\potMatrix}{P}
\newcommand{\potMatrixCard}{\potMatrix_{\text{card}}}
\newcommand{\potMatrixBal}{\potMatrix_{\text{bal}}}
\newcommand{\potMatrixEff}{\potMatrix_{\text{eff}}}
\newcommand{\potFunc}{H}
\newcommand{\action}{a}
\newcommand{\actionSet}{\mathcal{S}}
\newcommand{\allActions}{\boldsymbol{\mathcal{S}}}
\newcommand{\utilFun}{u}
\begin{document}

\begin{frontmatter}

% Title, preferably not more than 10 words.
\title{Transaction Pricing for Maximizing Throughput in a Sharded Blockchain Ledger} 

\author[First]{James R. Riehl} 
\author[First]{Jonathan Ward} 

\address[First]{Fetch.ai, St. John's Innovation Centre, Cowley Road, Cambridge, CB4 0WS, UK (e-mail: \{james.riehl, jonathan.ward}@fetch.ai).\}

\begin{abstract}
In this paper, we present a pricing mechanism that aligns incentives of agents who exchange resources on a decentralized ledger with the goal of maximizing transaction throughput. Subdividing a blockchain ledger into shards promises to greatly increase transaction throughput with minimal loss of security. However, the organization and type of the transactions also affects the ledger's efficiency, which is increased by wallet agents transacting in a single shard whenever possible while collectively distributing their transactions uniformly across the available shards.
Since there is no central authority to enforce these properties, the only means of achieving them is to design the system such that it is in agents' interest to act in a way that benefits overall throughput.
We show that our proposed pricing policy does exactly this by inducing a potential game for the agents, where the potential function relates directly to ledger throughput.
Simulations demonstrate that this policy leads to near-optimal throughput under a variety of conditions.
\end{abstract}

\begin{keyword}
blockchain, decentralized ledger, network throughput, potential game 
\end{keyword}

\end{frontmatter}

\section{Introduction} \label{intro}

Decentralized ledgers, commonly implemented as encrypted linked lists of transaction records, or \textit{blockchains}, allow individuals to trade resources and maintain a common state machine securely and without a central authority \cite{narayanan2016bitcoin,crosby2016blockchain}.
As the demand for such systems grows, the slow throughput of established systems, e.g. 7-15 transactions per second on Bitcoin and Ethereum \cite{croman2016scaling}, is becoming a major obstacle to more widespread adoption and success, especially in applications requiring high frequency or time-critical transactions.
One innovation that promises to significantly increase transaction throughput is subdividing each the transaction records into distinct \textit{shards}, allowing parallel communication, execution and storage of transactions that use different shards \cite{saraph2019empirical}.
These subdivision methods have proven effective in conventional databases \cite{corbett2013spanner}.
By evenly distributing transactions across multiple shards, the system can process transactions much faster than a serial blockchain ledger.
However, in a decentralized system, different external users submit transactions, and there is no guarantee that they will choose to distribute their transactions in a way that enables the system to operate efficiently.
In this paper, we investigate the use of small transaction surcharges as incentives to align the goals of the individual users with the system-wide goal of maximizing throughput.

Decentralized ledgers pose additional challenges due to various stakeholders having different and often competing objectives.
For example, it is in the agents' interest that transactions are fast and cheap, which is more likely to occur when congestion is low, but \textit{miners} or \textit{validators} (agents responsible for reaching a consensus on which transactions are valid) benefit when transactions are expensive, which may be the case when congestion is high. 
These conflicting incentives must be accounted for in the design of an effective sharding system for blockchain ledgers.

Several proposals for sharding block chains have previously been put forth \cite{kokoris2018omniledger,luu2016secure,zamani2018rapidchain,buterin2016ethereum}, each of which aims to make the ledgers more scalable while maintaining appropriate levels of security.
However, these approaches all rely on randomization for distributing transactions among shards and do little to explicitly mitigate the problems of load imbalance and frequent cross-shard transactions, which consume resources in communication between shards and potentially, depending on the sharding implementation, force the pausing of execution threads on one or both shards.

We address these problems here with a transaction pricing policy that incentivizes agents to choose shards in a way that maximizes ledger throughput.
In particular, the proposed pricing function is based on a novel transaction efficiency measure that induces a potential game for the agents, where the potential function relates directly to overall transaction throughput.
% Although this pricing mechanism was designed for a particular sharding implementation, %\cite{fetch2019design}, 
% we state it in general terms so that it may apply as generally as possible.

The problem considered here resembles that of congestion games, a classic example of potential games, in which the goal is to minimize congestion in transportation or communication networks, for example, by setting prices to align agent incentives with this goal \cite{monderer1996potential}.
Indeed if the only goal were to minimize congestion on the transaction network, this would be a straightforward application of congestion game theory.
However, the reduced efficiency caused by cross-shard transactions introduces additional complexity that must be accounted for in the pricing mechanism.
By combining transaction size and the distribution of transactions across shards into a single quantity called \textit{transaction efficiency}, we are able to express the throughput objective as a function of agents' shard choices and previous transactions, which we then use to set the price.
Decentralized ledgers sometimes use transaction fees as incentives for other agents to maintain the ledger by validating transactions, so our proposal would simply weight such fees to promote overall efficiency.
% Simulations verify that the pricing mechanism achieves these goals in idealized as well as more complex blockchain scenarios.
\section{Ledger and transaction model} \label{model}

The model consists of a set of agents who transact with their neighbors in the network via a blockchain ledger.

\subsection{Agents and network}

The network consists of a set $\agentSet$ of $\numAgents$ agents who are interconnected by the edges $\edgeSet \subseteq \agentSet^2$, where an edge $(i,j) \in \edgeSet$ means that agent $i$ can request a transaction from agent $j$.
We denote the set of neighboring agents from which agent $i$ requests transactions by $\neighborSet_i := \{j \in \agentSet: (i,j) \in \edgeSet\}$. 
Each agent $i$ maintains a balance of resources $\balVec_i := [\bal_i^1,\dots,\bal_i^\numLanes]^\top$, where $\bal_i^\lane$ denotes the amount of resources agent $i$ owns in shard $\lane \in \laneSet := \{1,\dots,\numLanes\}$, and its objective is to choose shards and execute transactions in a way that minimizes transaction fees.

\subsection{Transactions}

Assume that transactions are fully asynchronous and arrive in a sequence where one agent ($i$, the \textit{receiver}) requests a transaction of an amount ($\transAmount$) from another agent ($j$, the \textit{sender}), such that the sender is in the neighbor set of the receiver ($j \in \neighborSet_i$).
Let $\trans := (i,j,\transVec)$ denote a transaction requested by agent $i$ from agent $j$, where $\transVec := [\transElem_1,\dots,\transElem_\numLanes]^\top$ lists the amounts to be transferred in each shard, which we assume are non-negative (each $\transElem_{\lane} >= 0$), and sum to the total amount ($\sum_{\lane=1}^m \transElem_{\lane} = \transAmount$).
The \textit{cardinality} $|\transLaneSet|$ is the number of shards used in the transaction. 
Let $\transPool$ denote the set (or \textit{pool}) of transactions that are waiting to be added to a block.
We define the price of a transaction as a function of the transaction itself and the current transaction pool: $\transPrice := \priceFunc(\trans,\transPool)$, to be given in precise terms in section \ref{game}.
The transaction process proceeds as follows:
\begin{enumerate}
    \setlength\itemsep{1em}
    \item \textbf{Request:} Agent $i$ requests a transaction from agent $j\in\neighborSet_i$ and specifies an amount $\transAmount$ and a shard $\lane_i \in \laneSet$ in which to receive the transaction, with the goal of minimizing current and expected future transaction fees.
    Only the sender pays the transaction fees, but the receiver has an incentive to minimize these in order to maximize the probability that the sender accepts and fulfills the transactions.
    We can express the receiver's shard choice as
    \begin{equation} \label{eq:laneOptRec}
        \lane_i \in \arg\min_{\lane \in \laneSet} \big[\greed_r \priceFunc(\trans,\transPool) + (1 - \greed_r)E[\transPrice_{ij};\lane_i]\big],
    \end{equation} 
    where $E[\transPrice_{ij};\lane_i]$ denotes the expected price of future transactions from the sender to the receiver if the receiver requests the current transaction in shard $\lane_i$, and $\greed_r \in [0,1]$ denotes the priority receiving agents place on the current transaction fee relative to future transactions.
    Estimating $E[\transPrice_{ij};\lane_i]$ is a key element in the pricing mechanism and is discussed further in section \ref{sec:efficiency}.
    \item \textbf{Fulfillment:} Agent $j$ accepts and fulfills the transaction if there are sufficient funds (including fees) and chooses a set of shards $\laneSet_j \subseteq \laneSet$ from which to send resources, with the goal of minimizing current and expected future transaction fees. 
    Since the final transaction always includes the receiver's requested shard $\lane_i$, the transaction set is given by $\transLaneSet :=  \laneSet_j \cup \{\lane_{i}\}$.
    The set of all feasible transaction shard sets is then
    \begin{equation*}
        \feasLaneSets_j(\transAmount) := \Big\{\transLaneSet \subseteq \laneSet: \sum_{\lane \in \laneSet_j} \bal_{j}^\lane \geq \transAmount + \priceFunc(\trans,\transPool)\Big\}.
    \end{equation*}
    For a requested transaction, we assume that the sender chooses a shard set that minimizes the following expression:
    \begin{equation} \label{eq:laneOptSend}
    \laneSet_{j} \in \arg\min_{\transLaneSet \in \feasLaneSets_j(\transAmount)} \big[\greed_s \priceFunc(\augTransLaneSet,\transPool) + (1 - \greed_s)E[\transPrice_{ij};\lane]\big],
    \end{equation} 
    where $\greed_s \in [0,1]$ denotes the priority sending agents place on the current transaction fee relative to future transactions.
    The sender withdraws the resources to be transferred plus transaction fees from the shards in $\optTransLaneSet$, and the receiver adds the transferred balance to shard $\lane_i$.
    The result is a  transaction $\trans$ 
    that goes into the transaction pool $\transPool \leftarrow \transPool \cup \{\trans\}$.
    \item \textbf{Block assembly:} 
    
    We assume that the blockchain is divided into $\numLanes$ shards, each of which contains $\numSlices$ slots, and that the blocks in each shard are produced synchronously. 
    
    Let $\lanePool_{\lane}$ denote the set of transactions in pool $\transPool$ that use shard $\lane$: 
    \begin{equation*}
        \lanePool_{\lane} := \{(i,j,\transVec) \in \transPool: \transElem_{\lane} > 0 \}.
    \end{equation*}
    In this paper, we assume that when any shard becomes full (i.e., there exists a shard $\lane$ such that $|\lanePool_{\lane}| > \numSlices$) the block is assembled from all transactions in the pool and $\transPool$ is reset to empty.
    The maximum theoretical capacity of the blockchain ($\numLanes \times \numSlices$) is reached when the cardinality of all  transactions is one (no cross-shard transactions) and each shard contains exactly the maximum number of transactions.
    Note that it may not be possible to execute both sides of multi-shard transactions leading to failure of transactions of this type. Therefore, assuming that all transactions in the pool are executed, as we do in this study, leads to optimistic estimates of the throughput in the case of frequent cross-shard transactions. This leads to conservative estimates of the performance gains that would arise from implementing our proposed pricing policy. 
    Although this serves as a reasonable approximation for our purposes, coordinating states between shards is a nontrivial problem and provides further motivation to incentivize single-shard transactions. % \cite{fetch2019design}.
    We note that this analysis generalizes to blockchains with complex state execution rules such as smart contracts where cross-shard transactions require at least twice the computation of an otherwise identical single-sharded transaction. 
\end{enumerate}

\section{Transaction throughput}

\begin{figure*}[ht]
    \centering
    \includegraphics[width=0.75\textwidth]{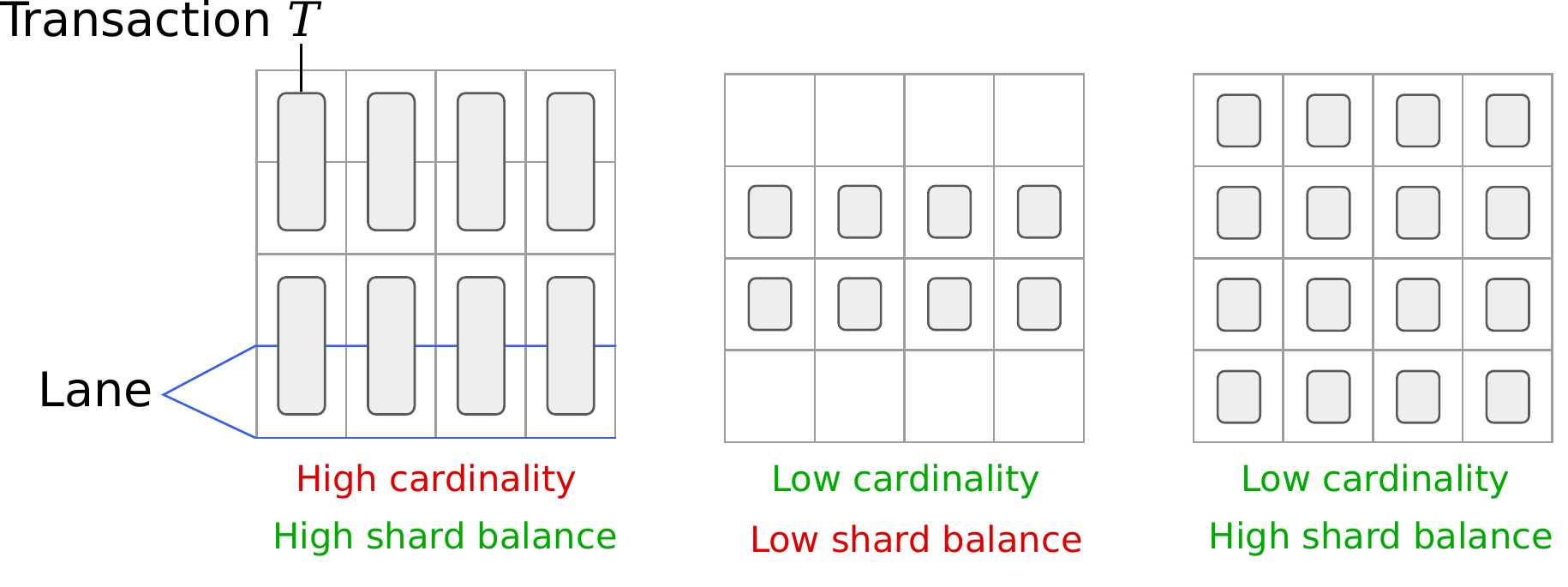}
    \caption{Cardinality and shard balance in three different transaction pools assembled into blocks. Rows corresponds to shards and columns represent block slices. The gray boxes symbolize transactions on the underlying shards.}
    \label{fig:cardcong}
\end{figure*}

Transaction throughput measures the number of transactions processed in a given unit of time, but since we do not explicitly include time in our model, we seek a time-independent alternative quantity.
Specifically, we define the \textit{transaction efficiency} of a pool $\transPool$ as the fraction of the theoretical maximum number of transactions that could be included in a block in which all transactions have cardinality one and are evenly distributed across shards.

There are two primary factors that determine transaction efficiency: \textit{cardinality} and \textit{shard  balance}.
We say that shard balance is high when the transaction pool uses the shards in roughly equal proportions and low when the transaction pool uses some shards much more than others.
Figure \ref{fig:cardcong} illustrates three partially completed blocks from transaction pools exhibiting varying degrees of cardinality and shard balance.

We measure shard balance in terms of the deviation from uniform shard usage in the transaction pool.
The usage of each shard relative to the pool is given by $\laneUseVec(\transPool) := [\laneUseElem_1(\transPool),\dots,\laneUseElem_\numLanes(\transPool)]^\top$, where
\begin{equation*}
    \laneUseElem_{\lane}(\transPool) := \frac{|\lanePool_{\lane}|}{\sum_{\trans\in\transPool}|\transLaneSet|}.
\end{equation*}
Since we consider only positive transaction fees, we focus on those shards with greater usage than average and define the loading of shard $\lane\in\laneSet$ by
\begin{equation*}
    \congElem_{\lane}(\transPool) := \max\left(0,\laneUseElem_{\lane}(\transPool) - \frac{1}{\numLanes}\right).
\end{equation*}
These values are collected in the loading vector $\congVec(\transPool) := [\congElem_1(\transPool),\dots,\congElem_\numLanes(\transPool)]^\top$, and we can now quantify the shard balance in the transaction pool as
\begin{equation} \label{eq:congestion}
    \laneEff_\transPool(\laneSet) := 1 - \sum_{\lane\in \laneSet} \congElem_{\lane}(\transPool). 
\end{equation}
Note that by definition, $\laneEff_\transPool(\laneSet) \in [0,1)$.
We can now express the total transaction efficiency as the shard balance divided by the mean cardinality of all transactions in the pool:
\begin{equation} \label{eq:transEff}
    \transEff_\transPool := \frac{\laneEff_\transPool(\laneSet)}{\displaystyle\frac{1}{|\transPool|}\sum_{\trans\in\transPool} |\transLaneSet|}.
\end{equation}
As an example, for the middle block in Figure \ref{fig:cardcong}, the shard usage is $\laneUseVec(\transPool) = (\frac{0}{8},\frac{4}{8},\frac{4}{8},\frac{0}{8})^\top$, resulting in the loading vector $\congVec(\transPool) = (0,\frac{1}{4},\frac{1}{4},0)^\top$ and total shard balance $\laneEff_\transPool(\laneSet) = \frac{1}{2}$.
Since the mean cardinality is one, the transaction efficiency is $\frac{1}{2}$. 
%Another way to interpret this is that the smallest four-shard block containing 12 resources would have exactly three block slices, and only 10 out of the 12 transactions in this pool could be included in such an ideal (3-slice) block.
For the left block in Figure \ref{fig:cardcong}, since the shard balance is one and the mean cardinality is two, the transaction efficiency is also $\frac{1}{2}$.
The block on the right achieves the maximum transaction efficiency of 1.

\subsection{Expected transaction efficiency} \label{sec:efficiency}

We introduce for each edge $(i,j) \in \edgeSet$ along which a transaction can take place, a \textit{shard  request distribution} $\laneReqWeightVec_{ij} := [\laneReqWeight_{ij1},\dots,\laneReqWeight_{ij\numLanes}]^\top$, where $\laneReqWeight_{ij\lane}$ denotes the probability that agent $i$ will choose shard $\lane$ when requesting a transaction from agent $j$.
Similarly, we define a \textit{shard  sending distribution} $\laneSendWeightVec_{ij} := [\laneSendWeight_{ij1},\dots,\laneSendWeight_{ij\numLanes}]^\top$, where $\laneSendWeight_{ij\lane}$ denotes the probability that agent $j$ will choose shard $\lane$ when sending a transaction to agent $i$.
These randomized distributions, which lie on a probability simplex (each $\laneReqWeight_{ij\lane} \geq 0$ and $\sum_{\lane=1}^\numLanes \laneReqWeight_{ij\lane} = 1$, and similarly for $\laneSendWeightVec_{ij}$), model the initial uncertainty about the shards used by neighboring agents and how such uncertainty evolves toward deterministic choices.

\textbf{Expected cardinality:} We can now write the expected cardinality of transactions between agents $i$ and $j$ as:
\begin{align} \nonumber
    E\left[|\transLaneSet|\right] &= \laneReqWeightVec_{ij}^\top \potMatrixCard \laneSendWeightVec_{ij} \\ &= 
    \begin{bmatrix}
        \laneReqWeight_{ij1} & \laneReqWeight_{ij2} & \cdots & \laneReqWeight_{ij\numLanes}
    \end{bmatrix}
    \begin{bmatrix}
        1  & 2 & \cdots & 2 \\ 
        2  & 1 & \cdots & 2 \\
        \vdots & \vdots & \ddots & \vdots \\
        2 & 2  & \cdots & 1 
    \end{bmatrix}
    \begin{bmatrix}
        \laneSendWeight_{ij1} \\ \laneSendWeight_{ij2} \\ \vdots \\ \laneSendWeight_{ij\numLanes}
    \end{bmatrix},   
    \label{eq:expCard}
\end{align}
where $\potMatrixCard$ is a matrix that encodes the expected cardinality when the agents request transactions from each other in the shards corresponding to the row and column of each entry. 

\textbf{Expected shard balance:} Similarly, we can express the expected shard balance as follows:
\begin{align} \nonumber
    &E\left[1-\sum_{\lane \in \transLaneSet} \congElem_{\lane}(\transPool)\right] = \laneReqWeightVec_{ij}^\top \potMatrixBal \laneSendWeightVec_{ij} = \\ 
    &\begin{bmatrix}
        \laneReqWeight_{ij1} & \cdots & \laneReqWeight_{ij\numLanes}
    \end{bmatrix}
    \begin{bmatrix}
        1 - \congElem_1 & 1 - \congElem_1 - \congElem_2 & \cdots & 1 - \congElem_1 - \congElem_\numLanes \\ 
        1 - \congElem_1 - \congElem_2 & 1 - \congElem_2 & \cdots & 1 - \congElem_2 - \congElem_\numLanes \\
        \vdots & \vdots & \ddots & \vdots \\
        1 - \congElem_1 - \congElem_\numLanes & 1 - \congElem_2 - \congElem_\numLanes & \cdots & 1 - \congElem_\numLanes
    \end{bmatrix}
    \begin{bmatrix}
        \laneSendWeight_{ij1} \\ \vdots \\ \laneSendWeight_{ij\numLanes}
    \end{bmatrix}, 
    \label{eq:expCong}
\end{align}
where $\potMatrixBal$ encodes the shard balance values corresponding to the shards used in the transaction (we omit the argument $\transPool$ for a more compact expression).
%It follows that the expected shard efficiency $\laneEff_\transPool(\transLaneSet)$ is given by $\laneReqWeightVec_{ij}^\top (\ones_{\numLanes\times\numLanes} - \potMatrixCong) \laneReqWeightVec_{ji}$, where $\ones_{\numLanes\times\numLanes}$ denotes the $\numLanes\times\numLanes$ matrix of all ones.

\textbf{Expected transaction efficiency:} 
Based on the definition of transaction efficiency for the entire transaction pool \eqref{eq:transEff}, we define the efficiency of a single transaction in a given pool as the shard balance of the transaction shards divided by the cardinality: 
\begin{equation} \label{eq:singleTransEff}
    \transEff_\transPool(\trans) := \frac{\laneEff_\transPool(\transLaneSet)}{|\transLaneSet|}.
\end{equation}
Using \eqref{eq:singleTransEff}, we can write the expected efficiency of transactions from agent $j$ to $i$:
\begin{align} \nonumber
    &E\left[\transEff_\transPool(\trans)\right] = \laneReqWeightVec_{ij}^\top \potMatrixEff \laneSendWeightVec_{ij} = \\  
    &\begin{bmatrix}
        \laneReqWeight_{ij1} & \laneReqWeight_{ij2} & \cdots & \laneReqWeight_{ij\numLanes}
    \end{bmatrix}
    \begin{bmatrix}
        1-\congElem_1 & \frac{1 - \congElem_1 - \congElem_2}{2} & \cdots & \frac{1 - \congElem_1 - \congElem_\numLanes}{2} \\ 
        \frac{1 - \congElem_1 - \congElem_2}{2} & 1 - \congElem_2 & \cdots & \frac{1 - \congElem_2 - \congElem_\numLanes}{2} \\
        \vdots & \vdots & \ddots & \vdots \\
        \frac{1 - \congElem_1 - \congElem_\numLanes}{2} & \frac{1 - \congElem_2 - \congElem_\numLanes}{2} & \cdots & 1 - \congElem_\numLanes
    \end{bmatrix}
    \begin{bmatrix}
        \laneSendWeight_{ij1} \\ \laneSendWeight_{ij2} \\ \vdots \\ \laneSendWeight_{ij\numLanes}
    \end{bmatrix}.
    \label{eq:expTransEff}
\end{align}
\section{Transaction pricing}  \label{game}

We propose the use of transaction fees to align the individual goals of minimizing fees with the system-wide goal of maximizing throughput.
A natural choice is to make the fee proportional to the desired objective, which we have quantified as the transaction efficiency.
Hence, we propose the following pricing function:
\begin{equation} \label{eq:price}
     \priceFunc(\trans,\transPool) := \opPrice(\trans) + \left(1 - \frac{\laneEff_\transPool(\transLaneSet)}{|\transLaneSet|^\cardExp}\right)\maxFee,
\end{equation}
where $\opPrice(\trans)$ is the nominal transaction price, which can vary with computational requirements and market demand, $\maxFee$ is the maximum transaction fee, and $\cardExp$ is a free parameter that can be used to calibrate the  cardinality estimate or to further discourage multi-shard transactions. 
Note that the price includes one minus the transaction efficiency since the price should be low when the efficiency is high.
To simplify the remaining analysis, we assume that $\opPrice(\trans) = 0$ and $\maxFee = 1$ unless otherwise stated.
However, it is straightforward to extend the analysis to include these parameters.

While \eqref{eq:price} defines the price for a particular transaction, the expected price of a future transaction requested by agent $i$ from agent $j$ is given by:
\begin{align} \nonumber
    &E\left[\transPrice_{ij}\right] =
    1 - \laneReqWeightVec_{ij}^\top \potMatrix \laneSendWeightVec_{ij} = 1 - \\
    &\begin{bmatrix}
        \laneReqWeight_{ij1} & \laneReqWeight_{ij2} & \cdots & \laneReqWeight_{ij\numLanes}
    \end{bmatrix}
    \begin{bmatrix}
        1-\congElem_1 & \frac{1 - \congElem_1 - \congElem_2}{2^\cardExp} & \cdots & \frac{1 - \congElem_1 - \congElem_\numLanes}{2^\cardExp} \\ 
        \frac{1 - \congElem_1 - \congElem_2}{2^\cardExp} & 1 - \congElem_2 & \cdots & \frac{1 - \congElem_2 - \congElem_\numLanes}{2^\cardExp} \\
        \vdots & \vdots & \ddots & \vdots \\
        \frac{1 - \congElem_1 - \congElem_\numLanes}{2^\cardExp} & \frac{1 - \congElem_2 - \congElem_\numLanes}{2^\cardExp} & \cdots & 1 - \congElem_\numLanes
    \end{bmatrix}
    \begin{bmatrix}
        \laneSendWeight_{ij1} \\ \laneSendWeight_{ij2} \\ \vdots \\ \laneSendWeight_{ij\numLanes}
    \end{bmatrix}. % \opPrice(\trans).
    \label{eq:expPrice}
\end{align} 

The expression \eqref{eq:expPrice} provides a direct link from the transaction price \eqref{eq:price} to the agents' optimal choice of shards in which to request transactions for the case of agents that wish to minimize long-term expected transaction fees ($\greed_r=\greed_s=0$), allowing us to rewrite \eqref{eq:laneOptRec} as follows:
\begin{equation} \label{eq:brUpdate}
    \brLaneReqWeightVec_{ij} := \arg\max_{\laneReqWeightVec_{ij} \in \simplex_m} \laneReqWeightVec_{ij}^\top \potMatrix \laneSendWeightVec_{ij},
\end{equation} 
where $\brLaneReqWeightVec_{ij}$ denotes an update to the shard request distribution $\laneReqWeightVec_{ij}$ and $\simplex_m$ denotes the $\numLanes$-dimensional probability simplex.
The update \eqref{eq:brUpdate} is indeed a \textit{best response} of agent $i$ (in mixed-strategy space) to the mixed strategy of agent $j$.
Similarly, the sender's optimal shard choice distribution is given by
\begin{equation} \label{eq:brSendUpdate}
    \brLaneSendWeightVec_{ij} := \arg\max_{\laneSendWeightVec_{ij} \in \simplex_m} \laneReqWeightVec_{ij}^\top \potMatrix \laneSendWeightVec_{ij},
\end{equation} 
Since the sender also seeks to minimize long-term expected transaction fees, this distribution update is indeed independent from the choice of the receiver.

There is an important class of multi-player games called \textit{potential games}, in which players choose actions to maximize their individual utility functions, which in turn increases some global utility function \cite{monderer1996potential}.
A key property of potential games is that when agents act to improve their utility functions, the system is guaranteed to converge to a Nash equilibrium, which is a state in which no action by any single agent will increase their utility.
In potential games, Nash equilibria also correspond to maxima of the global utility function.
We define potential games in precise terms below.

Let $\actionSet_i$ denote the space of actions for a single agent and let $\allActions := \actionSet_i^\numAgents$ denote the set of all actions in the system.
Denote by $\utilFun_i :$ the utility function of agent $i$.
Together these define a game for the $\numAgents$ agents, and such a game is said to be a \textit{potential game} if there exists a function $\potFunc : \allActions \rightarrow \reals$ such that for any agent $i \in \agentSet$ and any pair of actions $\action, \action' \in \actionSet_i$ (where $\action_{-i}$ denotes the actions of all agents except $i$):
\begin{equation*}
    \potFunc(\action_i',\action_{-i}) - \potFunc(\action_i,\action_{-i}) = \utilFun_i(\action_i',\action_{-i}) - \utilFun_i(\action_i,\action_{-i}).
\end{equation*}

Indeed we can show that the proposed pricing mechanism induces a potential game over the transaction edges with the following potential function:
\begin{equation} \label{eq:potential}
    \potFunc  := \sum_{(k,l)\in\edgeSet} \laneReqWeightVec_{kl}^\top \potMatrix \laneSendWeightVec_{kl}.
\end{equation}

\begin{theorem} \label{thm:potential}
    The game where $n$ agents connected by the edges $\edgeSet$ update their shard request and sending distributions according to the edge-utility functions $\utilFun_{ij} := \laneReqWeightVec_{ij}^\top \potMatrix \laneSendWeightVec_{ij}$ is a potential game with the potential function \eqref{eq:potential}.
\end{theorem}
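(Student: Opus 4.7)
The plan is to verify the defining identity of a potential game by exploiting the fact that the potential \eqref{eq:potential} decomposes as a sum over edges, where each summand depends only on the two distributions attached to that edge. Before computing, I need to fix the player/action/utility structure implicit in the statement. Each agent $i \in \agentSet$ is a player whose action $\action_i$ is the collection of all distributions it controls, namely the request distributions $\laneReqWeightVec_{ij}$ for $j \in \neighborSet_i$ together with the sending distributions $\laneSendWeightVec_{ki}$ for every $k$ with $(k,i) \in \edgeSet$. The agent's utility is the sum of the edge utilities on every edge incident to it,
\begin{equation*}
    \utilFun_i := \sum_{j \in \neighborSet_i} \utilFun_{ij} + \sum_{k :\, (k,i) \in \edgeSet} \utilFun_{ki},
\end{equation*}
which is the natural aggregation of the per-edge objectives that drive the best-response updates \eqref{eq:brUpdate} and \eqref{eq:brSendUpdate}.

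With the structure fixed, the remainder is bookkeeping. I would split the potential as $\potFunc = \potFunc_i + \potFunc_{-i}$, where $\potFunc_i$ collects the terms of \eqref{eq:potential} over edges incident to agent $i$ and $\potFunc_{-i}$ collects the remaining terms. By inspection, each summand $\laneReqWeightVec_{kl}^\top \potMatrix \laneSendWeightVec_{kl}$ depends only on the single request distribution and the single sending distribution attached to edge $(k,l)$; in particular, if $k \neq i$ and $l \neq i$ then it involves no distribution controlled by $i$. Hence $\potFunc_{-i}$ does not depend on $\action_i$, and by construction $\potFunc_i$ coincides with $\utilFun_i$ when viewed as a function of agent $i$'s action with everyone else fixed.

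Consequently, for any pair of actions $\action_i, \action_i' \in \actionSet_i$ and any profile $\action_{-i}$ of the remaining agents,
\begin{equation*}
    \potFunc(\action_i', \action_{-i}) - \potFunc(\action_i, \action_{-i}) = \potFunc_i(\action_i', \action_{-i}) - \potFunc_i(\action_i, \action_{-i}) = \utilFun_i(\action_i', \action_{-i}) - \utilFun_i(\action_i, \action_{-i}),
\end{equation*}
which is precisely the potential-game identity. The only conceptual step is fixing the agent-level utility as the sum of edge utilities over incident edges; once that interpretation is in place, the linearity of each edge term in its two distributions together with the disjointness of the distributions controlled by distinct agents makes the identity essentially immediate, so I do not anticipate a substantive obstacle.
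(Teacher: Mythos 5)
Your proof is correct, and the engine driving it is the same one the paper uses: the potential \eqref{eq:potential} is a sum of edge terms, each bilinear in the single request and single sending distribution attached to that edge, so a change to any distribution controlled by agent $i$ perturbs only the terms on edges incident to $i$. Where you genuinely differ is in how the game is packaged. The paper's proof takes one deviation of one distribution on one edge (either $\laneReqWeightVec_{ij} \to \laneReqWeightVec_{ij}'$ or $\laneSendWeightVec_{ij} \to \laneSendWeightVec_{ij}'$) and matches the resulting change in $\potFunc$ against the change in the single edge utility $\utilFun_{ij}$; implicitly the ``players'' there are the individual distributions, one per edge and role. You instead aggregate to the agent level: the player is the agent, its action is the full tuple of distributions it controls, its utility is the sum of incident edge utilities, and you verify the potential identity for an arbitrary joint deviation of that tuple via the decomposition $\potFunc = \utilFun_i + \potFunc_{-i}$ with $\potFunc_{-i}$ independent of $\action_i$. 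Your version is the one that literally instantiates the definition of a potential game the paper states (which quantifies over agents $i \in \agentSet$ and arbitrary pairs $\action_i, \action_i' \in \actionSet_i$), and it subsumes the paper's single-distribution check as a special case; the paper's version is lighter and lines up directly with the per-edge best-response updates \eqref{eq:brUpdate}--\eqref{eq:brSendUpdate} that the rest of the paper actually analyzes. Both yield the same conclusion here because the edge terms are additively separable across the distributions any one agent controls.
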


\begin{proof}
    Given an edge $(i,j) \in \edgeSet$, suppose agent $i$ updates its shard request distribution for this edge from $\laneReqWeightVec_{ij}$ to $\laneReqWeightVec_{ij}'$.
    Then, the change in the edge utility function is equal to $\utilFun_{ij}' - \utilFun_{ij} = (\laneReqWeightVec'_{ij} - \laneReqWeightVec_{ij})^\top \potMatrix \laneSendWeightVec_{ij}$.
    The resulting change in the potential function is
    \begin{equation*} 
        \potFunc' - \potFunc  := (\laneReqWeightVec'_{ij} - \laneReqWeightVec_{ij})^\top \potMatrix \laneSendWeightVec_{ij},
    \end{equation*}
    since the only change was to agent $i$'s shard request distribution from agent $j$,
    which is exactly equal to the change in the edge utility function $\utilFun_{ij}$.
    Similarly, if agent $j$ updates its shard sending distribution for this edge from $\laneSendWeightVec_{ij}$ to $\laneSendWeightVec_{ij}'$, then the change in the edge utility function is equal to $\potFunc_{ij}' - \potFunc_{ij} = \laneReqWeightVec^\top (\laneSendWeightVec'_{ij} - \laneSendWeightVec_{ij})$, which is also equal to the change in the potential function, completing the proof.
\end{proof}

Although the case we have analyzed is somewhat simplified, the fact that it constitutes a potential game is important because it ensures not only that the agents' incentives are aligned with the global objective, but that rational choices by the agents will result in convergence of the system to a maximum of the global potential function, which in our case corresponds to transaction throughput.
And their choices need not be optimal -- the only requirement is that agents take actions that increase their local utility function (lower their expected future transaction price).
This means that even deterministic shard choices (pure strategy best or better responses) will lead to convergence.
For example, a pure strategy best response update is given by replacing the probability simplex $\simplex_m$ in \eqref{eq:brUpdate} with the set of all pure strategies $\pureStrategySet_m := \{\laneReqWeightVec \in \simplex_m : \laneReqWeight_{\lane} \in \{0,1\} \text{ for each } \lane\in \laneSet\}$.

A potential problem with the approach described so far is that the update \eqref{eq:brUpdate} assumes that agents know the shard request or sending distribution of their transacting neighbors along each edge, which would require additional communication between agents.
To resolve this, let's assume that agents keeps estimates ($\laneSendWeightEstVec_{ij}$ for the receiver, $\laneReqWeightEstVec_{ij}$ for the sender) of the shard request and sending distributions of each neighbor, constructed simply from the normalized histogram of past transaction requests.
The new update rules that rely only on information available to the respective agents are then:
\begin{align} \label{eq:brReqUpdateEst}
    \brEstLaneReqWeightVec_{ij} &:= \arg\max_{\laneReqWeightVec_{ij} \in \simplex_m} \laneReqWeightVec_{ij}^\top \potMatrix \laneSendWeightEstVec_{ij}, \\
    \label{eq:brSendUpdateEst}
    \brEstLaneSendWeightVec_{ij} &:= \arg\max_{\laneSendWeightVec_{ij} \in \simplex_m} \laneReqWeightEstVec_{ij}^\top \potMatrix \laneSendWeightVec_{ij}.
\end{align} 
This turns out to be an example of \textit{fictitious play} in game theory, where agents estimate strategies of other players based on empirical distributions.
Indeed, multiplayer potential games in which players act to improve their utility using fictitious play are known to converge to a Nash equilibrium \cite{marden2009joint}.

The optimizations \eqref{eq:brReqUpdateEst}-\eqref{eq:brSendUpdateEst} are readily solved via linear programming, and the deterministic (pure-strategy) case is an integer program that reduces to finding the maximum entry in a $\numLanes$-dimensional vector:
\begin{align} \nonumber
    \brEstLaneReqWeightVec_{ij} &:= \arg\max_{\laneReqWeightVec_{ij} \in \pureStrategySet_m} \laneReqWeightVec_{ij}^\top \potMatrix \laneReqWeightEstVec_{ji} \\ \label{eq:brUpdateDet}
    &= \elVec_{\lane^*}, \text{ where } \lane^* = \arg\max_{\lane \in \laneSet} \potMatrix_{\lane} \laneReqWeightEstVec_{ji},
\end{align}
where $\elVec_\lane$ refers to column $\lane$ of the $\numLanes \times \numLanes$ identity matrix, and $\potMatrix_{\lane}$ denotes row $\lane$ of the matrix $\potMatrix$.
A similar modification can be made for the sending shard update.
The computational complexity of the pure-strategy and mixed-strategy optimizations are linear and polynomial (due to the complexity of linear programming, e.g. \cite{cohen2019solving}), respectively.

\section{Simulations} \label{simulations}

In this section, we investigate the performance of a simulated blockchain ledger with transaction price \eqref{eq:price} in which agents update their shard request distributions with pure strategy best response updates \eqref{eq:brUpdateDet}.

\subsection{Ideal case}

We begin with a simple scenario to test the pricing mechanism under ideal conditions.
Suppose that 20 agents transact with two neighbors each via a ring network on a ledger with 4 shards.
Each block contains 2500 slices meaning that the maximum capacity is 10000 transactions per block.
In this scenario, agents start with an arbitrarily large initial balance ($1e6$) in one shard (such that the resources in these shards will not be depleted), staggered among the agents, and transactions of a small fixed amount ($10$) are generated randomly by rounds.
That is, each of the 38 edges in the network executes a transaction in random order, and then the process repeats in a new random sequence until 5 blocks are eventually completed.

In the figures that follow, the x-axis corresponds to the transaction index, where anytime one shard reaches maximum capacity, a block is assembled (indicated by the vertical grid lines) and the transaction pool resets to zero.
The top panel shows the proportion of transactions contained in each shard, resulting in the balance value shown on the second panel. 
The third panel shows the mean transaction cardinality and the fourth shows the transaction efficiency calculated for each block.
As a baseline, Figure \ref{fig:random-20-4} shows the result of assigning transactions to random shards, modeling a standard hash-based sharding protocol.

We observe that the randomized policy achieves a reasonably even distribution among the shards, but since the mean cardinality is quite high, the efficiency is only about 50\%, yielding 25899 transactions in 5 blocks.

\begin{figure}[ht]
    \centering
    \includegraphics[width=\linewidth]{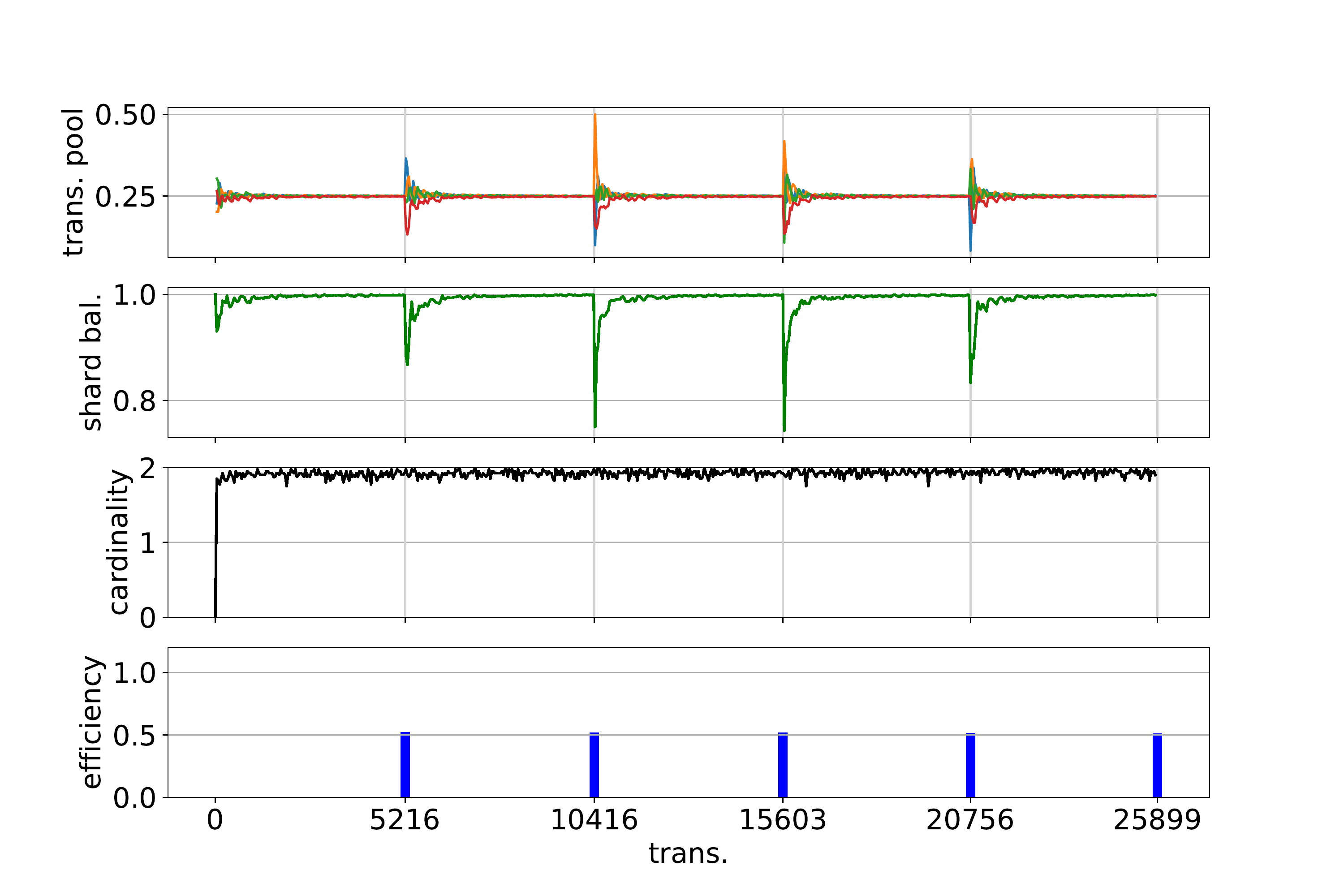}
    \caption{Fixed price and random shard requests on 20-agent ring network with 4 shards.}
    \label{fig:random-20-4}
\end{figure} 

Next, we apply load-minimizing pricing by setting $\cardExp = 0$ in \eqref{eq:price} and allowing the agents to update their shard requests with best-responses.
Figure \ref{fig:br-20-4-0} shows that the loading stays very low, but the transaction efficiency remains below 60\%, because the agents still have no incentive to transact in a single shard and therefore engage in frequent multi-shard transactions.

\begin{figure}[ht]
    \centering
    \includegraphics[width=\linewidth]{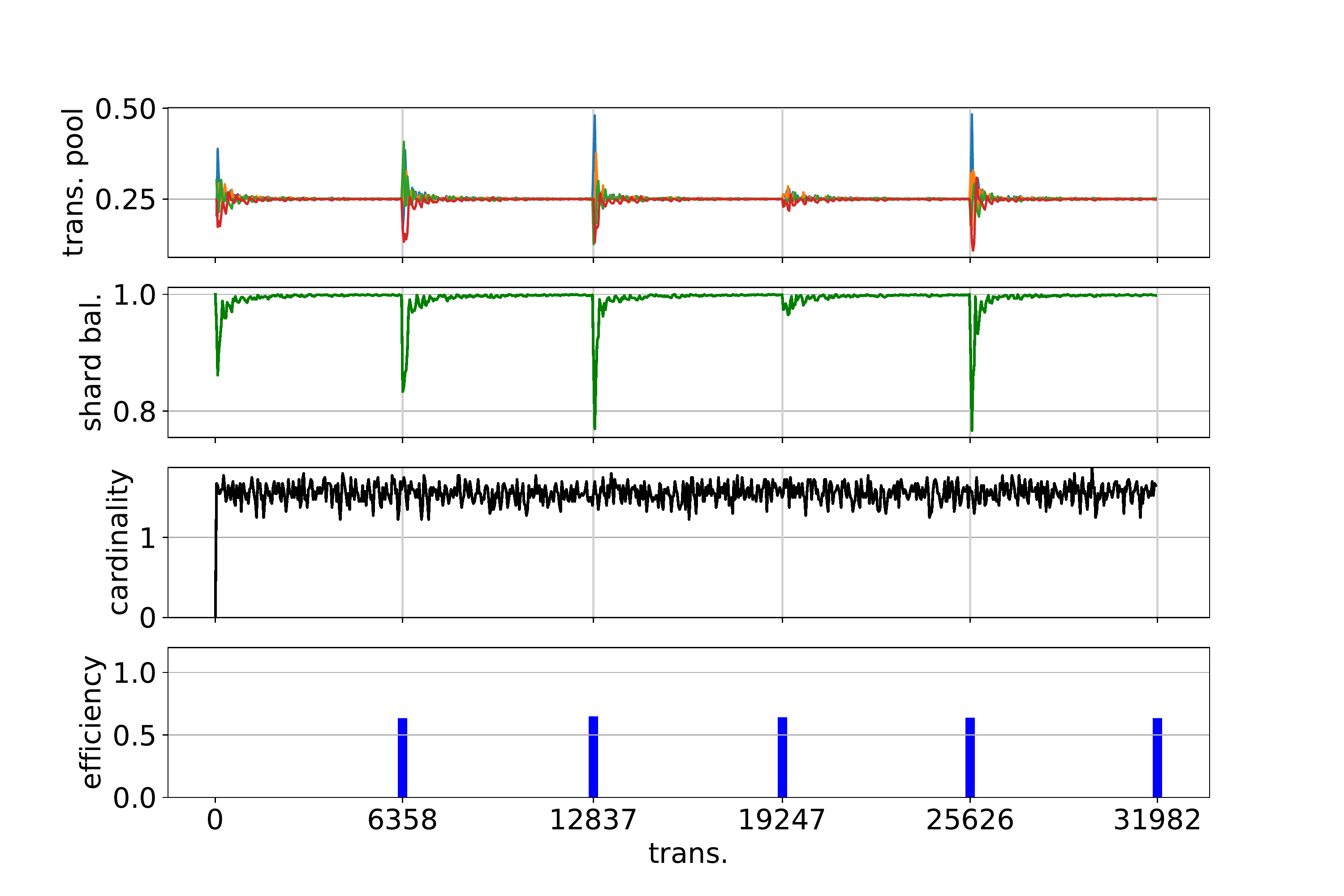}
    \caption{Congestion pricing with best-response update on 20-agent ring network with 4 shards.}
    \label{fig:br-20-4-0}
\end{figure} 

Finally, we see in Figure \ref{fig:br-20-4-1} that combining the proposed pricing mechanism ($\cardExp = 0.001$) with best-response updates results in high shard balance and low cardinality and therefore almost perfect transaction efficiency. 

\begin{figure}[ht]
    \centering
    \includegraphics[width=\linewidth]{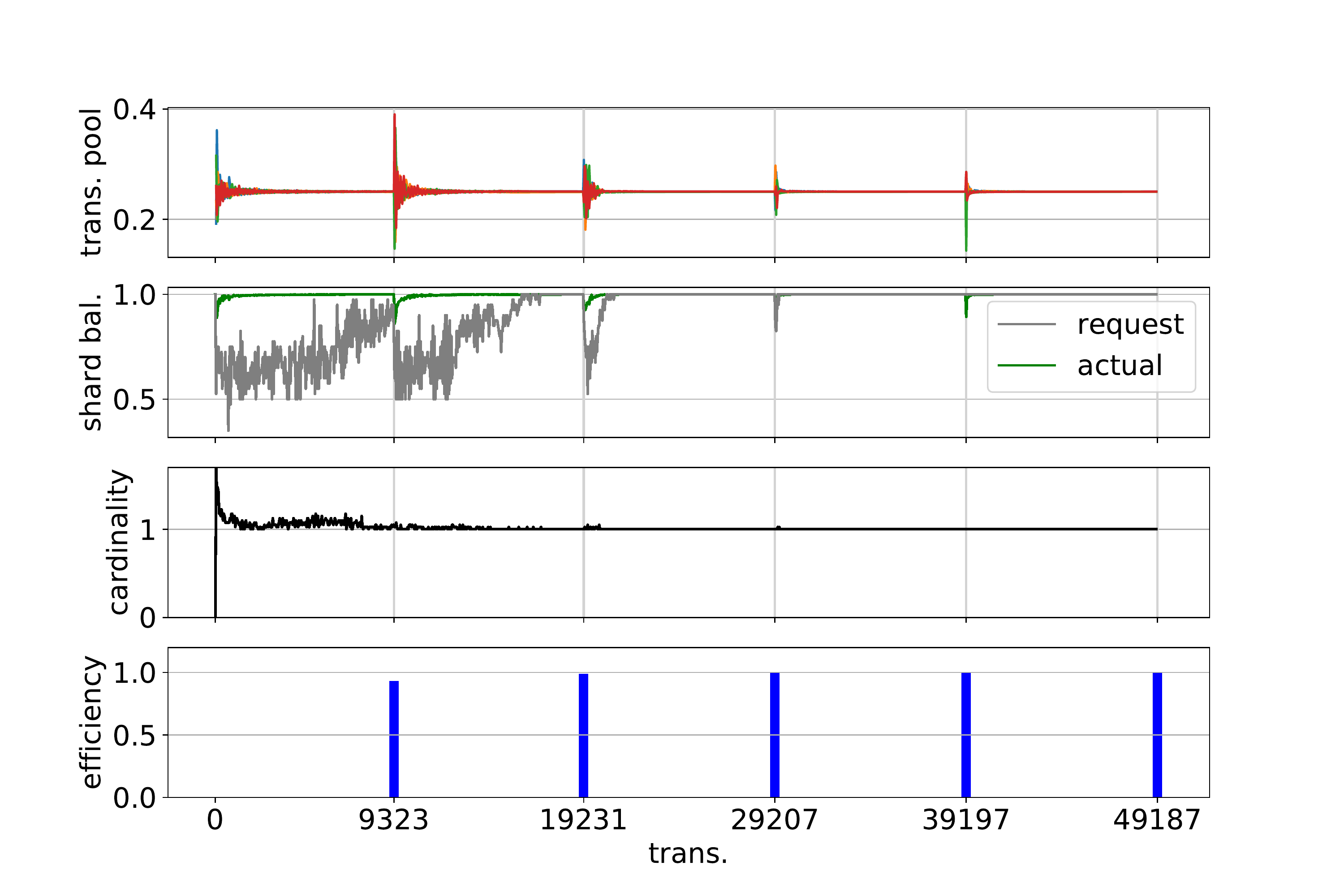}
    \caption{Efficiency pricing with best-response update on 20-agent ring network with 4 shards ($\cardExp = 0.001$).}
    \label{fig:br-20-4-1}
\end{figure}

\subsection{Larger random networks}

In the next simulation study, we construct a directed random network with a long-tailed degree distribution using a preferential attachment model.
Also, instead of generating transactions on repeated random sequences of edges, we generate them uniformly at random on the edges.
%As a result, agents will more frequently deplete their balances in some shards,  leading to higher cardinality transactions on average.
Blocks in this scenario contain 12500 slices and thus a maximum capacity of 100000 transactions.
In this case, we initialized the agents with arbitrarily large balances in each shard.
Figure \ref{fig:perf-br-100-8} shows that network still converges to a high transaction efficiency compared to the baseline shown in Figure \ref{fig:perf-random-100-8}.
The gray data in the loading plot shows how evenly distributed the shard requests are throughout the network.
As the agents acquire more information about the shard preferences of their neighbors, the shard requests converge to align with the corresponding send requests while maintaining a balance across shards.
The values of $\cardExp$ in both simulations were chosen by manual tuning, and the results are somewhat sensitive to small changes in this parameter.
An automated tuning method for $\cardExp$ would be a useful direction for future research.

\begin{figure}[ht]
    \centering
    \includegraphics[width=\linewidth]{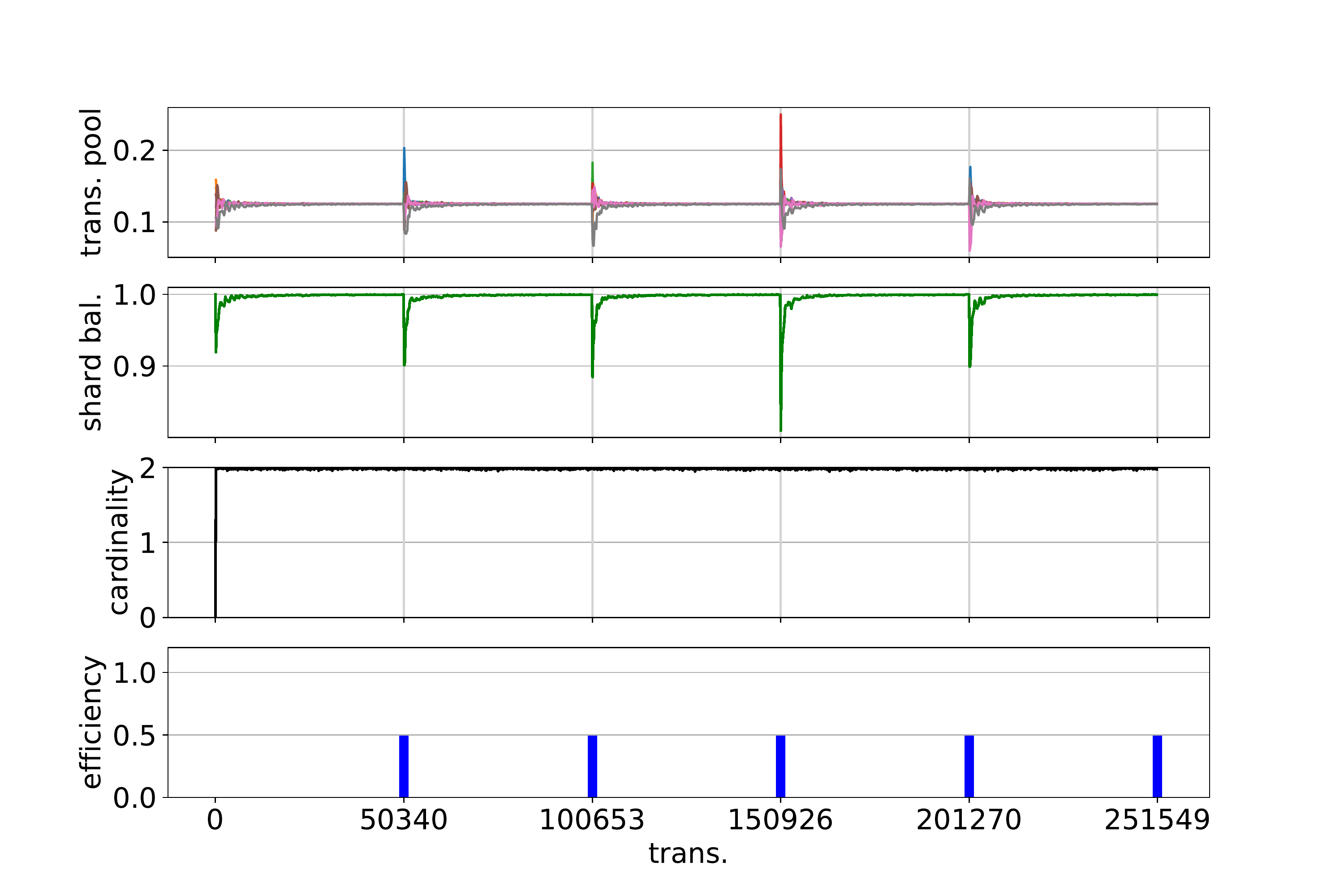}
    \caption{Fixed pricing with random shard requests on 100-agent scale-free network with 8 shards.}
    \label{fig:perf-random-100-8}
\end{figure}

\begin{figure}[ht]
    \centering
    \includegraphics[width=\linewidth]{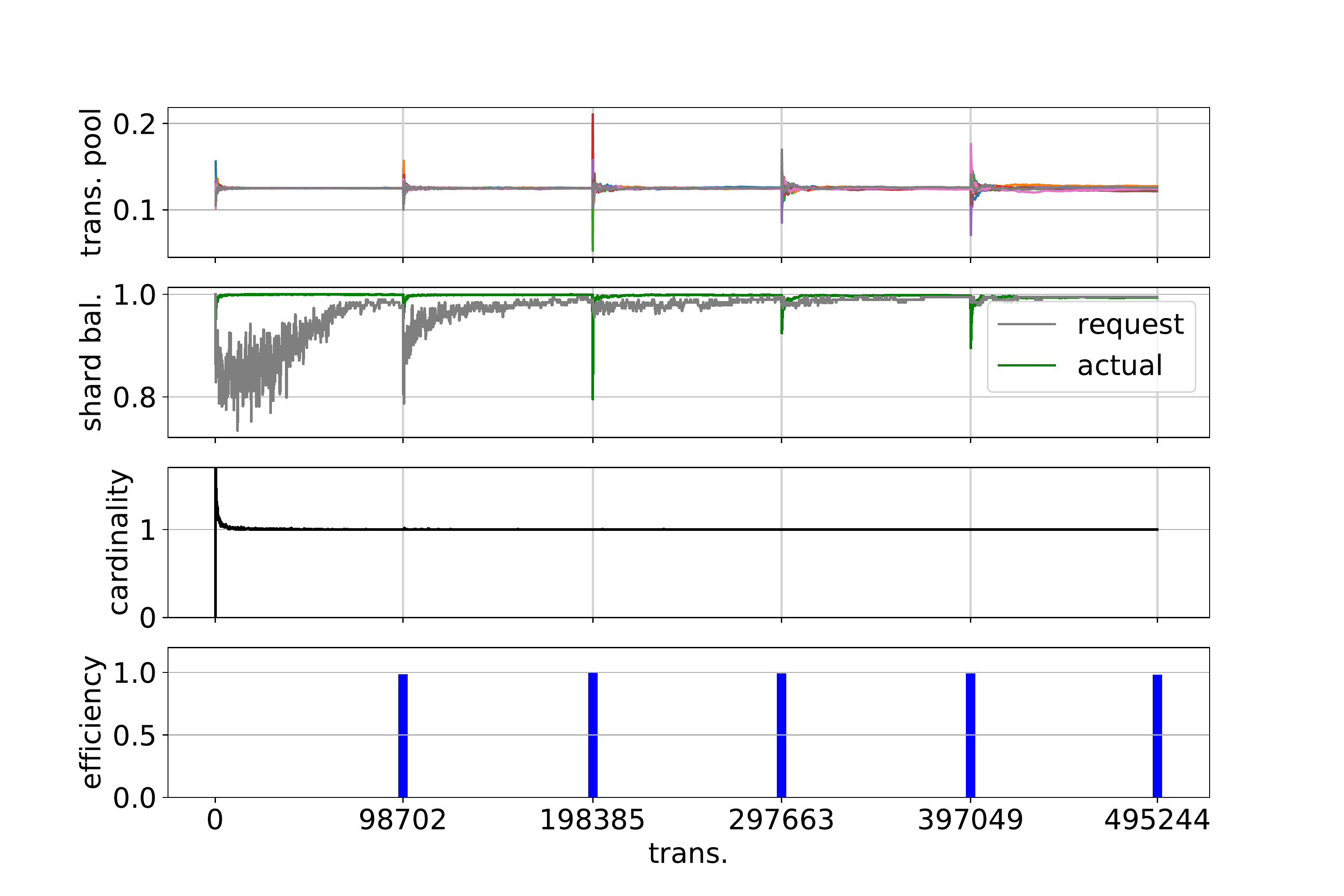}
    \caption{Efficiency pricing with best-response update on 100-agent directed random network with 8 shards ($\cardExp = 0.00015$).}
    \label{fig:perf-br-100-8}
\end{figure}

\section{Conclusions and future work} \label{conclusions}

The pricing function \eqref{eq:price} induces a potential game for the users of a sharded blockchain ledger, such that is in their interests to choose shards in a way that promotes ideal throughput conditions on the ledger.
Both the estimation of shard request distributions and the deterministic best-response policy are implementable with simple and fast computations, and guarantee convergence to a Nash equilibrium under some practical simplifying assumptions.
Simulations demonstrate the effectiveness of the policy in various conditions. %, including when a popular smart contract attracts a disproportionate amount of traffic on the ledger.

We plan to extend the approach in several ways.
For example, there is currently no mechanism that would encourage agents to request from different neighbors in the same lane, although this would seem reasonable in practice, especially if resources are scarce. 
This could be achieved by adding a term that encourages agents to align shard requests with their resource balance.
We also intend to more explicitly account for varying costs of transaction and smart contract execution and storage.

\begin{ack}
We are grateful to Marcin Abram and Jin-Mann Wong for insightful technical discussions and for help in editing the paper.
\end{ack}

\bibliography{references}

\end{document}